\newtheorem{theorem}{Theorem}
\newtheorem{proposition}[theorem]{Proposition} 
\newcommand{\BlackBox}{\rule{1.5ex}{1.5ex}}  % end of proof
\renewenvironment{proof}{\par\noindent{\bf Proof\ }}{\hfill\BlackBox\\[2mm]}
\begin{document}
\pagestyle{empty}
\thispagestyle{empty}
\LinesNumbered
\IncMargin{1mm}

\begin{center}
{\large \bf Prismatic Algorithm for Discrete D.C.\@ Programming Problem}\\
\vspace{5mm}
Yoshinobu Kawahara and Takashi Washio\\
The Institute of Scientific and Industrial Research (ISIR)\\
Osaka University\\
8-1 Mihogaoka, Ibaraki-shi, Osaka 567-0047 JAPAN \\
\texttt{kawahara@ar.sanken.osaka-u.ac.jp} \\
\vspace{8mm}
\end{center}

\begin{center}
{\bf Abstract}:
\end{center}
In this paper, we propose the first exact algorithm for minimizing the difference of two submodular functions (D.S.), {\em i.e.}, the discrete version of the D.C.\@ programming problem. The developed algorithm is a branch-and-bound-based algorithm which responds to the structure of this problem through the relationship between submodularity and convexity. The D.S.\@ programming problem covers a broad range of applications in machine learning because this generalizes the optimization of a wide class of set functions. We empirically investigate the performance of our algorithm, and illustrate the difference between exact and approximate solutions respectively obtained by the proposed and existing algorithms in feature selection and discriminative structure learning.
\vspace{8mm}

%==================================================
\section{Introduction}
Combinatorial optimization techniques have been actively applied to many machine learning applications, where submodularity often plays an important role to develop algorithms \cite{HJZL06,KMGG08,TCG+09,KNTB09,KC10,NKI10,Bac10}. In fact, many fundamental problems in machine learning can be formulated as submoular optimization. One of the important categories would be the D.S.\@ programming problem, {\em i.e.}, the problem of minimizing the difference of two submodular functions. This is a natural formulation of many machine learning problems, such as learning graph matching \cite{CMC+09}, discriminative structure learning \cite{NB05}, feature selection \cite{Bac10} and energy minimization \cite{RMBK06}.

In this paper, we propose a prismatic algorithm for the D.S.\@ programming problem, which is a branch-and-bound-based algorithm responding to the specific structure of this problem. To the best of our knowledge, this is the first exact algorithm to the D.S.\@ programming problem (although there exists an approximate algorithm for this problem \cite{NB05}). As is well known, the branch-and-bound method is one of the most successful frameworks in mathematical programming and has been incorporated into commercial softwares such as CPLEX \cite{Iba87,HT96}. We develop the algorithm based on the analogy with the D.C.\@ programming problem through the continuous relaxation of solution spaces and objective functions with the help of the Lov\'{a}sz extension \cite{Lov83,HPTV91,Mur03}. The algorithm is implemented as an iterative calculation of binary-integer linear programming (BILP).

Also, we discuss applications of the D.S.\@ programming problem in machine learning and investigate empirically the performance of our method and the difference between exact and approximate solutions through feature selection and discriminative structure-learning problems.

The remainder of the paper is organized as follows.
In Section \ref{se:app}, we give the formulation of the D.S.\@ programming problem and then describe its applications in machine learning. In Section~\ref{se:algo}, we give an outline of the proposed algorithm for this problem. Then, in Section~\ref{se:basic}, we explain the details of its basic operations. And finally, we give several empirical examples using artificial and real-world datasets in Section~\ref{se:exper}, and conclude the paper in Section~\ref{se:concl}.

\subsubsection*{Preliminaries and Notation:}
A set function $f$ is called submodular if $f(A)+f(B)\geq f(A\cup B)+f(A\cap B)$ for all $A,B\subseteq N$, where $N=\{1,\cdots,n\}$ \cite{Edm70,Fuj05}. Throughout this paper, we denote by $\hat{f}$ the Lov\'{a}sz extension of $f$, {\em i.e.}, a continuous function $\hat{f}:\mathbb{R}^n\rightarrow\mathbb{R}$ defined by
\begin{equation*}
\hat{f}(\boldsymbol{p}) = {\textstyle \sum_{j=1}^{m-1}}(\hat{p}_j-\hat{p}_{j+1})f(U_j)+\hat{p}_m f(U_m),
\end{equation*}
where $U_j=\{i\in N:p_i\geq\hat{p}_j\}$ and $\hat{p}_1>\cdots>\hat{p}_m$ are the $m$ distinct elements of $\boldsymbol{p}$ \cite{Lov83,Mur03}. Also, we denote by $I_A\in\{0,1\}^n$ the characteristic vector of a subset $A\in N$, {\em i.e.}, $I_A=\sum_{i\in A}\boldsymbol{e}_i$ where $\boldsymbol{e}_i$ is the $i$-th unit vector. Note, through the definition of the characteristic vector, any subset $A\in N$ has the one-to-one correspondence with the vertex of a $n$-dimensional cube $D:=\{\boldsymbol{x}\in\mathbb{R}^n:0\leq x_i\leq 1 (i=1,\ldots,n)\}$. And, we denote by $(A,t)(T)$ all combinations of a real value plus subset whose corresponding vectors $(I_A,t)$ are inside or on the surface of a polytope $T\in\mathbb{R}^{n+1}$.

%==================================================
\section{The D.S.\@ Programming Problem and its Applications}
\label{se:app}

Let $f$ and $g$ are submodular functions. In this paper, we address an {\em exact} algorithm to solve the D.S.\@ programming problem, {\em i.e.}, the problem of minimizing the difference of two submodular functions:
\begin{equation}
\label{eq:dsprog}
\min_{A\in N}~~f(A)-g(A).
\end{equation}
As is well known, any real-valued function whose second partial derivatives are continuous everywhere can be represented as the difference of two convex functions \cite{HT96}. As well, the problem \eqref{eq:dsprog} generalizes a wide class of set-function optimization problems. Problem~\eqref{eq:dsprog} covers a broad range of applications in machine learning \cite{NB05,RMBK06,CMC+09,Bac10}. Here, we give a few examples.

\subsubsection*{Feature selection using structured-sparsity inducing norms}
Sparse methods for supervised learning, where we aim at finding good predictors from as few variables as possible, have attracted interest from machine learning community. This combinatorial problem is known to be a submodular maximization problem with cardinality constraint for commonly used measures such as least-squared errors  \cite{DK08,KNTB09}. And as is well known, if we replace the cardinality function with its convex envelope such as $l_1$-norm, this can be turned into a convex optimization problem. Recently, it is reported that submodular functions in place of the cardinality can give a wider family of polyhedral norms and may incorporate prior knowledge or structural constraints in sparse methods \cite{Bac10}. Then, the objective (that is supposed to be minimized) becomes the sum of a loss function (often, supermodular) and submodular regularization terms.

\subsubsection*{Discriminative structure learning}
It is reported that discriminatively structured Bayesian classifier often outperforms generatively one \cite{NB05,PB05}. One commonly used metric for discriminative structure learning would be EAR (explaining away residual) \cite{Bil00}. EAR is defined as the difference of the conditional mutual information between variables by class $C$ and non-conditional one, {\em i.e.}, $I(X_i;X_j|C)-I(X_i;X_j)$. In structure learning, we repeatedly try to find a subset in variables that minimize this kind of measure. Since the (symmetric) mutual information is a submodular function, obviously this problem leads the D.S.\@ programming problem \cite{NB05}.

\subsubsection*{Energy minimization in computer vision}
In computer vision, images is often modeled with a Markov random fields, where each node represents a pixel. Let $\mathcal{G}=(\mathcal{V},\mathcal{E})$ be the undirected graph, where a label $x_s\in\mathcal{L}$ is assigned on each node. Then, many tasks in computer vision can be naturally formulated in terms of energy minimization where the energy function has the form: $E(\boldsymbol{x})=\sum_{p\in\mathcal{V}}\theta_p(\boldsymbol{x}_p)+\sum_{(p,q)\in\mathcal{E}}\theta(\boldsymbol{x}_p,\boldsymbol{x}_q)$, where $\theta_p(i)$ and $\theta_{p,q}(i,j)$ are univariate and pairwise potentials. In a pairwise potential, submodularity is defined as $\theta_{pq}(x_p,x_q)+\theta_{pq}(x_p',x_q')\geq \theta_{pq}((x_p,x_q)\wedge(x_p',x_q'))+\theta_{pq}((x_p,x_q)\vee(x_p',x_q'))$ (see, for example, \cite{SKK+07}). Based on this, many energy function in computer vision can be written with a submodular function $E_1(\boldsymbol{x})$ and a supermodular function $E_2(\boldsymbol{x})$ as $E(\boldsymbol{x})=E_1(\boldsymbol{x})+E_2(\boldsymbol{x})$ (ex.\@ \cite{RMBK06}). Or, in case of binarized energy ({\em i.e.}, $\mathcal{L}=\{0,1\}$), even if such explicit decomposition is not known, a non-unique decomposition to submodular and supermodular functions can be always given \cite{She06}.

%==================================================
\section{Prismatic Algorithm for the D.S.\@ Programming Problem}
\label{se:algo}

By introducing an additional variable $t(\in\mathbb{R})$, Problem~\eqref{eq:dsprog} can be converted into the equivalent problem with a supermodular objective function and a submodular feasible set, {\em i.e.},
\begin{equation}
\label{eq:dsprog2}
\min_{A\in N,t\in\mathbb{R}}~t-g(A)~~~~~\text{s.t.}~~f(A)-t\leq 0.
\end{equation}

\begin{figure}[t]
\centering
%\vspace*{1mm}
\includegraphics[width=.4\linewidth]{./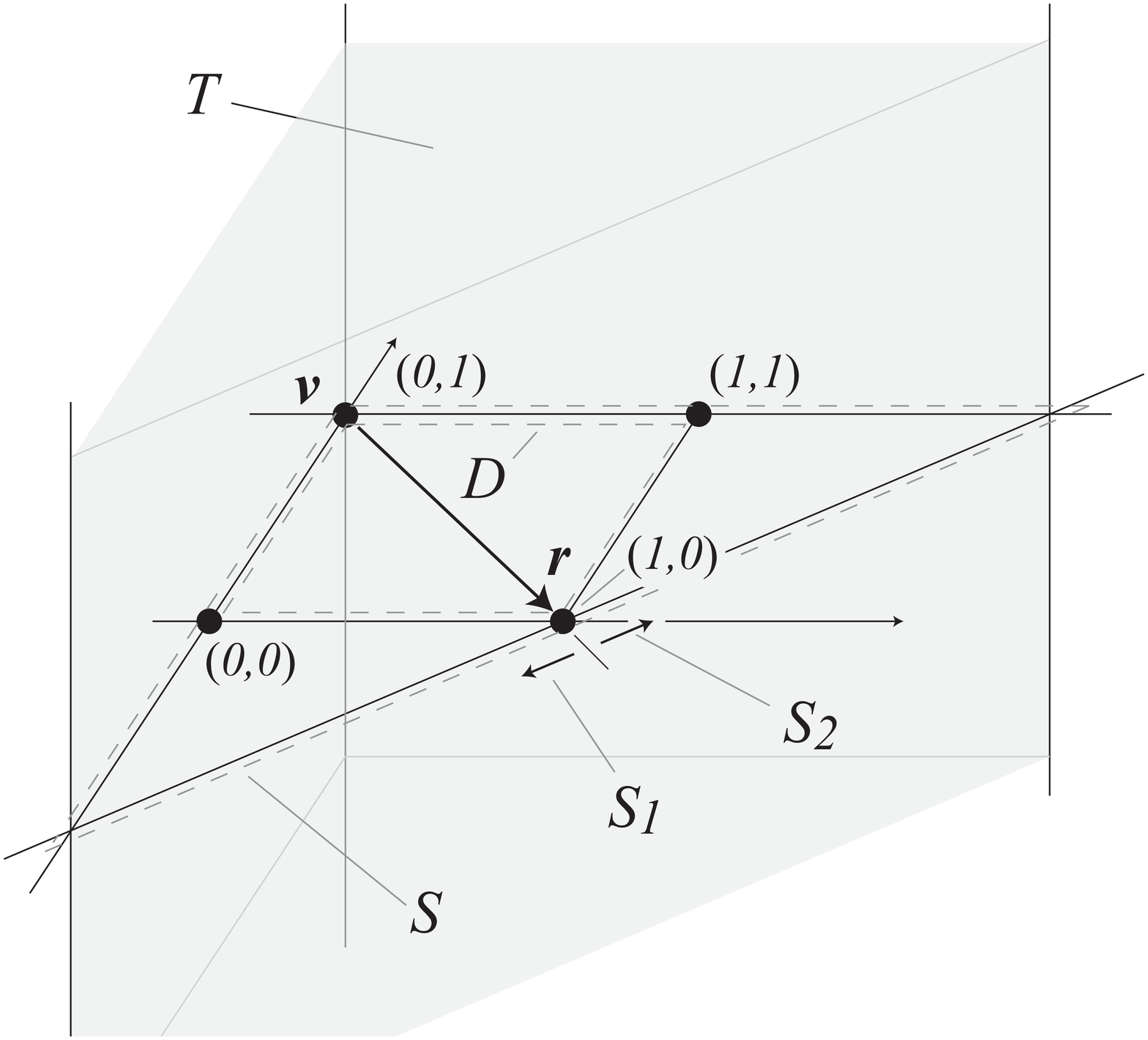}\vspace{-7mm}
\caption{Illustration of the prismatic algorithm for the D.S.\@ programming problem.}
\label{fig:prism}
\end{figure}

Obviously, if $(A^*,t^*)$ is an optimal solution of Problem~\eqref{eq:dsprog2}, then $A^*$ is an optimal solution of Problem~\eqref{eq:dsprog} and $t^*=f(A^*)$. The proposed algorithm is a realization of the branch-and-bound scheme which responds to this specific structure of the problem.

To this end, we first define a {\em prism} $T=T(S)\subset\mathbb{R}^{n+1}$ by
\begin{equation*}
T=\{(\boldsymbol{x},t)\in\mathbb{R}^n\times\mathbb{R}:\boldsymbol{x}\in S\},
\end{equation*}
where $S$ is an $n$-simplex. $S$ is obtained from the $n$-dimensional cube $D$ at the initial iteration (as described in Section~\ref{sss:first_prism}), or by the subdivision operation described in the later part of this section (and the detail will be described in Section~\ref{sss:subdivide}). The prism $T$ has $n+1$ edges that are vertical lines ({\em i.e.}, lines parallel to the $t$-axis) which pass through the $n+1$ vertices of $S$, respectively \cite{HPTV91}.

Our algorithm is an iterative procedure which mainly consists of two parts; {\em branching} and {\em bounding}, as well as other branch-and-bound frameworks \cite{Iba87}. In {\em branching}, subproblems are constructed by dividing the feasible region of a parent problem. And in {\em bounding}, we judge whether an optimal solution exists in the region of a subproblem and its descendants by calculating an upper bound of the subproblem and comparing it with an lower bound of the original problem. Some more details for branching and bounding are described as follows.

\subsubsection*{Branching}
The branching operation in our method is carried out using the property of a simplex. That is, since, in a $n$-simplex, any $r+1$ vertices are not on a $r-1$-dimensional hyperplane for $r\leq n$, any $n$-simplex can be divided as $S=\bigcup_{i=1}^p S_i$, where $p\geq 2$ and $S_i$ are $n$-simplices such that each pair of simplices $S_i, S_j (i\neq j)$ intersects at most in common boundary points (the way of constructing such partition is explained in Section~\ref{sss:subdivide}). Then, $T=\bigcup_{i=1}^p T_i$, where $T_i=\{(\boldsymbol{x},t)\in\mathbb{R}^n\times\mathbb{R}:\boldsymbol{x}\in S_i\}$, is a natural prismatic partition of $T$ induced by the above simplical partition.% For each prism $T_i$, a lower bound $\beta(T_i)$ for $t-g(A)$ over $(A,t)(T_i\cap P)$ is calculated, and $\beta=\min\{\beta(T_i):i=1,\ldots,p\}$ is a lower bound of the objective function in Problem~\eqref{eq:dsprog2}.

\subsubsection*{Bounding}
For the bounding operation on $S$ (resp., $T$), we consider a polyhedral convex set $P$ such that $P\supset\tilde{D}$, where $\tilde{D}=\{(\boldsymbol{x},t)\in\mathbb{R}^n\times\mathbb{R}:\boldsymbol{x}\in D,\hat{f}(\boldsymbol{x})\leq t\}$ is the region corresponding to the feasible set of Problem~\eqref{eq:dsprog2}. At the first iteration, such $P$ is obtained as
\begin{equation*}
P_0 = \{(\boldsymbol{x},t)\in\mathbb{R}^n\times\mathbb{R}:\boldsymbol{x}\in S,t\geq\tilde{t}\},
\end{equation*}
where $\tilde{t}$ is a real number satisfying $\tilde{t}\leq\min\{f(A):A\in N\}$. Here, $\tilde{t}$ can be determined by using some existing submodular minimization solver \cite{Que98,FHI06}. Or, at later iterations, more refined $P$, such that $P_0\supset P_1\supset\cdots\supset\tilde{D}$, is constructed as described in Section~\ref{sss:outer}.

As described in Section~\ref{sss:lower}, a lower bound $\beta(T)$ of $t-g(A)$ on the current prism $T$ can be calculated through the binary-integer linear programming (BILP) (or the linear programming (LP)) using $P$, obtained as described above. Let $\alpha$ be the lowest function value ({\em i.e.}, an upper bound of $t-g(A)$ on $\tilde{D}$) found so far. Then, if $\beta(T)\geq\alpha$, we can conclude that there is no feasible solution which gives a function value better than $\alpha$ and can remove $T$ without loss of optimality.

The pseudo-code of the proposed algorithm is described in Algorithm~\ref{alg:prism}. % The finiteness of convergence of this algorithm is obvious if the division process separates ... while ... (even ...) not necessarily ...
In the following section, we explain the details of the operations involved in this algorithm.

\IncMargin{1mm}
\begin{algorithm}[t]
\SetKwFor{While}{while}{}{}
\SetKwIF{If}{ElseIf}{Else}{if}{then}{else if}{else}{endif}
Construct a simplex $S_0\supset D$, its corresponding prism $T_0$ and a polyhedral convex set $P_0\supset\tilde{D}$.\\
Let $\alpha_0$ be the best objective function value known in advance. Then, solve the BILP~\eqref{eq:mip} corresponding to $\alpha_0$ and $T_0$, and let $\beta_0=\beta(T_0,P_0,\alpha_0)$ and $(\bar{A}^0,\bar{t}_0)$ be the point satisfying $\beta_0=\bar{t}_0-g(\bar{A}^0)$.\\
Set $\mathcal{R}_0\leftarrow T_0$.\\
\While{$\mathcal{R}_k\neq\emptyset$}{
Select a prism $T_k^*\in\mathcal{R}_k$ satisfying $\beta_k=\beta(T_k^*),~(\bar{\boldsymbol{v}}^k,\bar{t}_k)\in T_k^*$.\\
\If{$(\bar{\boldsymbol{v}}^k,\bar{t}_k)\in\tilde{D}$}{Set $P_{k+1}=P_k$.}
\Else{Construct $l_k(\boldsymbol{x},t)$ according to \eqref{eq:cplane}, and set $P_{k+1}=\{(\boldsymbol{x},t)\in P_k:l_k(\boldsymbol{x},t)\leq 0\}$.}
Subdivide $T_k^*=T(S_k^*)$ into a finite number of subprisms $T_{k,j}$($j$$\in$$J_k$) (cf.~Section~\ref{sss:subdivide}).\\
For each $j\in J_k$, solve the BILP~\eqref{eq:mip} with respect to $T_{k,j}$, $P_{k+1}$ and $\alpha_k$.\\
Delete all $T_{k,j}$$(j$$\in$$J_k)$ satisfying (DR1) or (DR2). Let $\mathcal{R}'_k$ denote the collection of remaining prisms $T_{k,j}$$(j\in J_k)$, and for each $T\in\mathcal{M}_k'$ set
\begin{equation*}
\beta(T)=\max\{\beta(T_k^*),\beta(T,P_{k+1},\alpha_k)\}.
\end{equation*}\\
Let $F_k$ be the set of new feasible points detected while solving BILP in Step~11, and set
\begin{equation*}
\alpha_{k+1}=\min\{\alpha_k,\min\{t-g(A):(A,t)\in F_k\}\}.
\end{equation*}\\
Delete all $T$$\in$$\mathcal{M}_k$ satisfying $\beta(T)$$\geq$$\alpha_{k+1}$ and let $\mathcal{R}_k$ be $\mathcal{R}_{k-1}\setminus T_k \in\mathcal{M}_k$.\\
Set $\mathcal{M}_{k+1}$$\leftarrow$$(\mathcal{R}_k$$\setminus$$\{T_k^*\})$$\cup$$\mathcal{M}_k'$ and $\beta_{k+1}$$\leftarrow$$\min\{\beta(T)$$:$$T$$\in$$\mathcal{M}_{k+1}\}$.
}
\caption{Pseudo-code of the prismatic algorithm for the D.S\@ programming problem.}
\label{alg:prism}
\end{algorithm}

%----------
\section{Basic Operations}
\label{se:basic}
Obviously, the procedure described in Section~\ref{se:algo} involves the following basic operations:
\begin{enumerate}
\item Construction of the first prism: A prism needs to be constructed from a hypercube at first,
\item Subdivision process: A prism is divided into a finite number of sub-prisms at each iteration,
\item Bound estimation: For each prism generated throughout the algorithm, a lower bound for the objective function $t-g(A)$ over the part of the feasible set contained in this prism is computed,
\item Construction of cutting planes: Throughout the algorithm, a sequence of polyhedral convex sets $P_0,P_1,\cdots$ is constructed such that $P_0\supset P_1\supset\cdots\supset\tilde{D}$. Each set $P_j$ is generated by a cutting plane to cut off a part of $P_{j-1}$, and
\item Deletion of no-feasible prisms: At each iteration, we try to delete prisms that contain no feasible solution better than the one obtained so far.
\end{enumerate}

%%%%%
\subsection{Construction of the first prism}
\label{sss:first_prism}
The initial simplex $S_0\supset D$ (which yields the initial prism $T_0\supset\tilde{D}$) can be constructed as follows. Now, let $\boldsymbol{v}$ and $A_{\boldsymbol{v}}$ be a vertex of $D$ and its corresponding subset in $N$, respectively, {\em i.e.}, $\boldsymbol{v}=\sum_{i\in A_{\boldsymbol{v}}}\boldsymbol{e}_i$. Then, the initial simplex $S_0\supset D$ can be constructed by
\begin{equation*}
S_0 = \{\boldsymbol{x}\in\mathbb{R}^n:x_i\leq 1(i\in A_{\boldsymbol{v}}), x_i\geq 0(i\in N\setminus A_{\boldsymbol{v}}),\boldsymbol{a}^T\boldsymbol{x}\leq\gamma\},
\end{equation*}
where $\boldsymbol{a}=\sum_{i\in N\setminus A_{\boldsymbol{v}}}\boldsymbol{e}_i-\sum_{i\in A_{\boldsymbol{v}}}\boldsymbol{e}_i$ and $\gamma=|N\setminus A_{\boldsymbol{v}}|$. The $n+1$ vertices of $S_0$ are $\boldsymbol{v}$ and the $n$ points where the hyperplane $\{\boldsymbol{x}\in\mathbb{R}^n:\boldsymbol{a}^T\boldsymbol{x}=\gamma\}$ intersects the edges of the cone $\{\boldsymbol{x}\in\mathbb{R}^n:x_i\leq 1(i\in A_{\boldsymbol{v}}),x_i\geq 0(i\in N\setminus A_{\boldsymbol{v}})\}$. Note this is just an option and any $n$-simplex $S\supset D$ is available.

%%%%%
\subsection{Sub-division of a prism}
\label{sss:subdivide}
Let $S_k$ and $T_k$ be the simplex and prism at $k$-th iteration in the algorithm, respectively. We denote $S_k$ as $S_k=[\boldsymbol{v}_k^i,\ldots,\boldsymbol{v}_k^{n+1}]:=\text{conv}\{\boldsymbol{v}_k^1,\ldots,\boldsymbol{v}_k^{n+1}\}$ which is defined as the convex full of its vertices $\boldsymbol{v}_k^1,\ldots,\boldsymbol{v}_k^{n+1}$. Then, any $\boldsymbol{r}\in S_k$ can be represented as
\begin{equation*}
\boldsymbol{r}={\textstyle \sum_{i=1}^{n+1}}\lambda_i\boldsymbol{v}_k^i,~{\textstyle \sum_{i=1}^{n+1}\lambda_i=1},~\lambda_i\geq 0~(i=1,\ldots,n+1).
\end{equation*}
Suppose that $\boldsymbol{r}\neq\boldsymbol{v}_k^i~(i=1,\ldots,n+1)$. For each $i$ satisfying $\lambda_i>0$, let $S_k^i$ be the subsimplex of $S_k$ defined by
\begin{equation}
\label{eq:S}
S_k^i=[\boldsymbol{v}_k^1,\ldots,\boldsymbol{v}_k^{i-1},\boldsymbol{r},\boldsymbol{v}_k^{i+1},\ldots,\boldsymbol{v}_k^{n+1}].
\end{equation}
Then, the collection $\{ S_k^i:\lambda_i>0\}$ defines a partition of $S_k$, {\em i.e.}, we have \cite{HT96}
\begin{equation*}
{\textstyle\bigcup_{\lambda_i>0}}S_k^i=S_k,~\text{int}~S_k^i\cap\text{int}~S_k^j=\emptyset~~\text{ for }~~i\neq j.
\end{equation*}
In a natural way, the prisms $T(S_k^i)$ generated by the simplices $S_k^i$ defined in Eq.~\eqref{eq:S} form a partition of $T_k$. This subdivision process of prisms is {\em exhaustive}, {\em i.e.}, for every nested (decreasing) sequence of prisms $\{T_q\}$ generated by this process, we have $\bigcap_{q=0}^\infty T_q=\tau$, where $\tau$ is a line perpendicular to $\mathbb{R}^n$ (a vertical line) \cite{HPTV91}. Although several subdivision process can be applied, we use a classical {\em bisection} one, {\em i.e.}, each simplex is divided into subsimplices by choosing in Eq.~\eqref{eq:S} as
\begin{equation*}
\boldsymbol{r}=(\boldsymbol{v}_k^{i_1}+\boldsymbol{v}_k^{i_2})/2,
\end{equation*}
where $\|\boldsymbol{v}_k^{i_1}-\boldsymbol{v}_k^{i_2}\|=\max\{\|\boldsymbol{v}_k^i-\boldsymbol{v}_k^j\|:i,j\in\{0,\ldots,n\},i\neq j\}$ (see Figure~\ref{fig:prism}).

%%%%%
\subsection{Lower bounds}
\label{sss:lower}
Again, let $S_k$ and $T_k$ be the simplex and prism at $k$-th iteration in the algorithm, respectively. And, let $\alpha$ be an upper bound of $t-g(A)$, which is the smallest value of $t-g(A)$ attained at a feasible point known so far in the algorithm. Moreover, let $P_k$ be a polyhedral convex set which contains $\tilde{D}$ and be represented as
\begin{equation}
\label{eq:poly}
P_k = \{(\boldsymbol{x},t)\in\mathbb{R}^n\times\mathbb{R}:A_k\boldsymbol{x}+\boldsymbol{a}_kt\leq\boldsymbol{b}_k\},
\end{equation}
where $A_k$ is a real $(m\times n)$-matrix and $\boldsymbol{a}_k,\boldsymbol{b}_k\in\mathbb{R}^m$.\footnote{Note that $P_k$ is updated at each iteration, which does not depend on $S_k$, as described in Section~\ref{sss:outer}.} Now, a lower bound $\beta(T_k,P_k,\alpha)$ of $t-g(A)$ over $T_k\cap\tilde{D}$ can be computed as follows. In this section, we describe only the BILP implementation. The LP one and some empirical comparison are discussed in the supplementary document.

First, let $\boldsymbol{v}_k^i$ ($i=1,\ldots,n+1$) denote the vertices of $S_k$, and define $I(S_k)=\{i\in\{1,\ldots,n+1\}:\boldsymbol{v}_k^i\in \mathbb{B}^n\}$ and
\begin{equation*}
\mu = \left\{\begin{array}{ll}
\min\{\alpha,\min\{\hat{f}(\boldsymbol{v}_k^i)-\hat{g}(\boldsymbol{v}_k^i):i\in I(S)\}\}, & \text{if}~~I(S)\neq\emptyset, \\
\alpha, & \text{if}~~I(S)=\emptyset .
\end{array}\right.
\end{equation*}
For each $i=1,\ldots,n+1$, consider the point $(\boldsymbol{v}_k^i,t_k^i)$ where the edge of $T_k$ passing through $\boldsymbol{v}_k^i$ intersects the level set $\{(\boldsymbol{x},t):t-\hat{g}(\boldsymbol{x})=\mu\}$, {\em i.e.},
\begin{equation*}
t_k^i = \hat{g}(\boldsymbol{v}_k^i)+\mu~~(i=1,\ldots,n+1).
\end{equation*}
Then, let us denote the uniquely defined hyperplane through the points $(\boldsymbol{v}_k^i,t_k^i)$ by $H=\{(\boldsymbol{x},t)\in\mathbb{R}^n\times\mathbb{R}:\boldsymbol{p}^T\boldsymbol{x}-t=\gamma$, where $\boldsymbol{p}\in\mathbb{R}^n$ and $\gamma\in\mathbb{R}$. Consider the upper and lower halfspace generated by $H$, {\em i.e.}, $H_+=\{(\boldsymbol{x},t)\in\mathbb{R}^n\times\mathbb{R}:\boldsymbol{p}^T\boldsymbol{x}-t\leq\gamma\}$ and $H_-=\{(\boldsymbol{x},t)\in\mathbb{R}^n\times\mathbb{R}:\boldsymbol{p}^T\boldsymbol{x}-t\geq\gamma\}$. If $T_k\cap\tilde{D}\subset H_+$, then we see from the supermodularity of $g(A)$ (equivalently, the concavity of $\hat{g}(\boldsymbol{x})$) that
\begin{equation*}
{\small\begin{split}
\min\{t-g(A):(A,t)\in(A,t)(T_k\cap\tilde{D})\}
&>    \min\{t-g(A):(A,t)\in(A,t)(T_k\cap H_+)\} \\
&\geq \min\{t-\hat{g}(\boldsymbol{x}):(\boldsymbol{x},t)\in T_k\cap H_+\} \\
&=    \min\{t-\hat{g}(\boldsymbol{x}):(\boldsymbol{x},t)\in \{(\boldsymbol{v}_k^1,t_k^1),\ldots,(\boldsymbol{v}_k^{n+1},t_k^{n+1})\}\} = \mu.
\end{split}}
\end{equation*}
Otherwise, we shift the hyperplane $H$ (downward with respect to $t$) until it reaches a point $\boldsymbol{z}=(\boldsymbol{x}^*,t^*)$ ($\in T_k\cap P\cap H_-,\boldsymbol{x}^*\in\mathbb{B}^n$) ($(\boldsymbol{x}^*,t^*)$ is a point with the largest distance to $H$ and the corresponding pair $(A,t)$ (since $\boldsymbol{x}^*\in\mathbb{B}^n$) is in $(A,t)(T_k\cap P\cap H_-)$). Let $\bar{H}$ denote the resulting supporting hyperplane, and denote by $\bar{H}_+$ the upper halfspace generated by $\bar{H}$. Moreover, for each $i=1,\ldots,n+1$, let $\boldsymbol{z}^i=(\boldsymbol{v}_k^i,\bar{t}_k^i)$ be the point where the edge of $T$ passing through $\boldsymbol{v}_k^i$ intersects $\bar{H}$. Then, it follows $(A,t)(T_k\cap\tilde{D})\subset (A,t)(T_k\cap P)\subset (A,t)(T_k\cap\bar{H}_+)$, and hence
\begin{equation*}
{\small\begin{split}
\min\{t-g(A):(A,t)\in(A,t)(T_k\cap\tilde{D})\}
&\geq \min\{t-g(A):(A,t)\in(A,t)(T_k\cap\bar{H}_+)\} \\
&=    \min\{\bar{t}_k^i-\hat{g}(\boldsymbol{v}_k^i):i=1,\ldots,n+1\}.
\end{split}}
\end{equation*}
Now, the above consideration leads to the following BILP in $(\boldsymbol{\lambda},\boldsymbol{x},t)$:
\begin{equation}
\label{eq:mip}
\begin{split}
\max_{\boldsymbol{\lambda},\boldsymbol{x},t} ~~ \left({\textstyle\sum_{i=1}^{n+1}}t_i\lambda_i-t\right) ~~~~~~
\text{s.t.} &~~ A\boldsymbol{x}+\boldsymbol{a}t\leq\boldsymbol{b},~\boldsymbol{x}={\textstyle \sum_{i=1}^{n+1}}\lambda_i\boldsymbol{v}_k^i,~\boldsymbol{x}\in\mathbb{B}^n,\\
            &~~{\textstyle \sum_{i=1}^{n+1}}\lambda_i=1,~\lambda_i\geq 0~~(i=1,\ldots,n+1),
\end{split}
\end{equation}
where $A$, $\boldsymbol{a}$ and $\boldsymbol{b}$ are given in Eq.~\eqref{eq:poly}.

\begin{proposition}
\label{le:lower}
(a) If the system \eqref{eq:mip} has no solution, then intersection $(A,t)(T\cap\tilde{D})$ is empty.\\
(b) Otherwise, let $(\boldsymbol{\lambda}^*,\boldsymbol{x}^*,t^*)$ be an optimal solution of BILP~\eqref{eq:mip} and $c^*=\sum_{i=1}^{n+1}t_i\lambda_i^*-t^*$ its optimal value, respectively. Then, the following statements hold:\\
~(b1) If $c^*\leq 0$, then $(A,t)(T\cap\tilde{D})\subset (A,t)(H_+$).\\
~(b2) If $c^*>0$, then $\boldsymbol{z}=(\sum_{i=1}^{n+1}\lambda_i\boldsymbol{v}_k^i,t_k^*)$, $\boldsymbol{z}^i=(\boldsymbol{v}_k^i,\bar{t}_k^i)=(\boldsymbol{v}_k^i,t_k^i-c^*)$ and $\bar{t}_k^i-\hat{g}(\boldsymbol{v}_k^i)=\mu-c^* ~ (i=1,\ldots,n+1)$.
\end{proposition}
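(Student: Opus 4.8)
The plan is to read the objective of the BILP~\eqref{eq:mip} geometrically. Since $H$ is the graph of the affine map $\boldsymbol{x}\mapsto\boldsymbol{p}^T\boldsymbol{x}-\gamma$ and passes through the points $(\boldsymbol{v}_k^i,t_k^i)$ with $t_i=t_k^i=\hat{g}(\boldsymbol{v}_k^i)+\mu$, for any $\boldsymbol{x}=\sum_i\lambda_i\boldsymbol{v}_k^i$ with $\sum_i\lambda_i=1$ the quantity $\sum_i t_i\lambda_i$ equals $\boldsymbol{p}^T\boldsymbol{x}-\gamma$, i.e.\ the $t$-coordinate of the point of $H$ above $\boldsymbol{x}$. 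Hence the objective $\sum_i t_i\lambda_i-t$ measures the signed vertical distance of $(\boldsymbol{x},t)$ below $H$, and $c^*$ is the largest such distance over the feasible set. The first fact I would record is that, because $\boldsymbol{x}=\sum_i\lambda_i\boldsymbol{v}_k^i$ with $\lambda_i\geq0,\sum_i\lambda_i=1$ expresses $\boldsymbol{x}\in S_k$, while $\boldsymbol{x}\in\mathbb{B}^n$ forces $\boldsymbol{x}=I_A$ for some $A$ and $A\boldsymbol{x}+\boldsymbol{a}t\leq\boldsymbol{b}$ expresses $(\boldsymbol{x},t)\in P$, the feasible set of~\eqref{eq:mip} is exactly the image of $(A,t)(T\cap P)$.

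For part (a) I would argue by contraposition. Suppose $(A,t)(T\cap\tilde{D})\neq\emptyset$ and pick $(A,t)$ in it. Then $(I_A,t)\in T\cap\tilde{D}$, so $I_A\in S_k$ admits barycentric coordinates $\boldsymbol{\lambda}$ with $\lambda_i\geq0,\sum_i\lambda_i=1$, while $I_A\in\mathbb{B}^n$ and $(I_A,t)\in\tilde{D}\subset P$. Thus $(\boldsymbol{\lambda},I_A,t)$ is feasible for~\eqref{eq:mip}, so the system has a solution; the contrapositive is precisely (a).

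For (b1), since $c^*$ is the maximum of $\sum_i t_i\lambda_i-t$ over the feasible set, every feasible $(\boldsymbol{x},t)$ satisfies $\sum_i t_i\lambda_i-t\leq c^*\leq0$, i.e.\ $\boldsymbol{p}^T\boldsymbol{x}-t\leq\gamma$, which is exactly $(\boldsymbol{x},t)\in H_+$. Applying this to each $(A,t)\in(A,t)(T\cap\tilde{D})$, which is feasible by the previous paragraph, yields $(A,t)(T\cap\tilde{D})\subset(A,t)(H_+)$. For (b2), when $c^*>0$ I would define $\bar{H}$ as $H$ translated downward in $t$ by $c^*$, so its graph is $\boldsymbol{x}\mapsto\boldsymbol{p}^T\boldsymbol{x}-\gamma-c^*$. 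Writing $\boldsymbol{x}^*=\sum_i\lambda_i^*\boldsymbol{v}_k^i$ and $t_k^*=t^*$, the optimality relation $\sum_i t_i\lambda_i^*-t^*=c^*$ shows that $\boldsymbol{z}=(\boldsymbol{x}^*,t^*)$ lies on $\bar{H}$, and maximality of $c^*$ shows every feasible point lies in $\bar{H}_+$; hence $\bar{H}$ is the supporting hyperplane described before the proposition and $\boldsymbol{z}$ is its contact point of largest distance to $H$. Intersecting $\bar{H}$ with the vertical edge through $\boldsymbol{v}_k^i$ gives $\bar{t}_k^i=\boldsymbol{p}^T\boldsymbol{v}_k^i-\gamma-c^*=t_k^i-c^*$, whence $\boldsymbol{z}^i=(\boldsymbol{v}_k^i,t_k^i-c^*)$, and substituting $t_k^i=\hat{g}(\boldsymbol{v}_k^i)+\mu$ yields $\bar{t}_k^i-\hat{g}(\boldsymbol{v}_k^i)=\mu-c^*$.

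The routine algebra, namely affineness of the $H$-height, the barycentric identity $\sum_i t_i\lambda_i=\boldsymbol{p}^T\boldsymbol{x}-\gamma$, and the final substitution, is immediate once this geometric reading is in place. I expect the only delicate point to be the bookkeeping that identifies the feasible set of the BILP with $(A,t)(T\cap P)$ and confirms that the optimizer coincides with the contact point $\boldsymbol{z}$ of the shifted hyperplane $\bar{H}$: here one must use that $\boldsymbol{x}\in\mathbb{B}^n$ together with $\boldsymbol{x}\in S_k$ pins $\boldsymbol{x}$ to a cube vertex $I_A$ lying in the simplex, and that translating $H$ by the optimal value $c^*$ simultaneously keeps every feasible point in $\bar{H}_+$ and makes $\bar{H}$ pass through the maximizer.
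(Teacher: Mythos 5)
Your proposal is correct and follows essentially the same route as the paper: you identify the feasible set of the BILP~\eqref{eq:mip} with $(A,t)(T\cap P)$, use the barycentric identity $\sum_i t_i\lambda_i - t = \boldsymbol{p}^T\boldsymbol{x}-t-\gamma$ to read the objective as the signed vertical distance below $H$ (the paper packages this as the equivalence of \eqref{eq:mip} with the auxiliary problem \eqref{eq:mip2} via $\gamma^*=c^*+\gamma$), and then derive (b1) from maximality and (b2) from the downward shift of $H$ by $c^*$. The only cosmetic difference is that you spell out part (a) by contraposition where the paper simply asserts the coincidence of the feasible sets.
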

\begin{proof}
First, we prove part (a). Since every point in $S_k$ is uniquely representable as $\boldsymbol{x}=\sum_{i=1}^{n+1}\lambda_i\boldsymbol{v}^i$, we see from Eq.~\eqref{eq:poly} that the set $(A,t)(T_k\cap P)$ coincide with the feasible set of problem~\eqref{eq:mip}. Therefore, if the system~\eqref{eq:mip} has no solution, then $(A,t)(T_k\cap P)=\emptyset$, and hence $(A,t)(T_k\cap\tilde{D})=\emptyset$ (because $\tilde{D}\subset P$).\vspace*{1mm}\\
Next, we move to part (b). Since the equation of $H$ is $\boldsymbol{p}^T\boldsymbol{x}-t=\gamma$, it follows that determining the hyperplane $\bar{H}$ and the point $\boldsymbol{z}$ amounts to solving the binary integer linear programming problem:
\begin{equation}
\label{eq:mip2}
\max ~~ \boldsymbol{p}^T\boldsymbol{x}-t ~~~~~~
\text{s.t.} ~~ (\boldsymbol{x},t)\in T\cap P,~\boldsymbol{x}\in\mathbb{B}^n.
\end{equation}
Here, we note that the objective of the above can be represented as
\begin{equation*}
\boldsymbol{p}^T\boldsymbol{x}-t = \boldsymbol{p}^T\left({\textstyle \sum_{i=1}^{n+1}}\lambda_i\boldsymbol{v}_k^i\right)-t={\textstyle \sum_{i=1}^{n+1}}\lambda_i\boldsymbol{p}^T\boldsymbol{v}^i_k-t.
\end{equation*}
On the other hand, since $(\boldsymbol{v}^i,t_i)\in H$, we have $\boldsymbol{p}^T\boldsymbol{v}^i-t_i=\gamma$ ($i=1,\ldots,n+1$), and hence
\begin{equation*}
\boldsymbol{p}^T\boldsymbol{x}-t = {\textstyle \sum_{i=1}^{n+1}} \lambda_i(\gamma+t_i)-t={\textstyle \sum_{i=1}^{n+1}}t_i\lambda_i-t+\gamma.
\end{equation*}
Thus, the two BILPs \eqref{eq:mip} and \eqref{eq:mip2} are equivalent. And, if $\gamma^*$ denotes the optimal objective function value in Eq.~\eqref{eq:mip2}, then $\gamma^*=c^*+\gamma$. If $\gamma^*\leq\gamma$, then it follows from the definition of $H_+$ that $\bar{H}$ is obtained by a parallel shift of $H$ in the direction $H_+$. Therefore, $c^*\leq 0$ implies $(A,t)(T_k\cap P_k)\subset (A,t)(H_+)$, and hence $(A,t)(T_k\cap \tilde{D})\subset (A,t)(H_+)$.\vspace*{1mm}\\
Since $\bar{H}=\{(\boldsymbol{x},t)\in\mathbb{R}^n\times\mathbb{R}:\boldsymbol{p}^T\boldsymbol{x}-t=\gamma^*\}$ and $H=\{(\boldsymbol{x},t)\in\mathbb{R}^n\times\mathbb{R}:\boldsymbol{p}^T\boldsymbol{x}-t=\gamma\}$ we see that for each intersection point $(\boldsymbol{v}_k^i,\bar{t}_k^i)$ (and $(\boldsymbol{v}_k^i,t_k^i)$) of the edge of $T_k$ passing through $\boldsymbol{v}_k^i$ with $\bar{H}$ (and $H$), we have $\boldsymbol{p}^T\boldsymbol{v}_k^i-\bar{t}_k^i=\gamma^*$ and $\boldsymbol{p}^T\boldsymbol{v}_k^i-t_k^i=\gamma$, respectively. This implies that $\bar{t}_k^i=t_k^i+\gamma-\gamma^*=t_k^i-c^*$, and (using $t_k^i=\hat{g}(\boldsymbol{v}_k^i)+\mu$) that $\bar{t}_k^i=\hat{g}(\boldsymbol{v}_k^i)+\mu-c^*$.
\end{proof}
%The proof of this proposition follows the one for the D.C.\@ programming problems \cite{HPTV91} (the detail is addressed in the supplementary document).

From the above, we see that, in the case (b1), $\mu$ constitutes a lower bound of $(t-g(A))$ wheres, in the case (b2), such a lower bound is given by $\min\{\bar{t}_k^i-\hat{g}(\boldsymbol{v}_k^i):i=1,\ldots,n+1\}$. Thus, Proposition~\ref{le:lower} provides the lower bound
\begin{equation}
\label{eq:beta}
\beta_k(T_k,P_k,\alpha) = \left\{\begin{array}{ll}
+\infty, & \text{if BILP \eqref{eq:mip} has no feasible point}, \\
\mu,     & \text{if}~ c^*\leq 0, \\
\mu-c^*  & \text{if}~ c^*>0.
\end{array}\right.
\end{equation}
As stated in Section~\ref{sss:delete}, $T_k$ can be deleted from further consideration when $\beta_k=\infty$ or $\mu$.

%%%%%
\subsection{Outer approximation}
\label{sss:outer}
The polyhedral convex set $P\supset\tilde{D}$ used in the preceding section is updated in each iteration, {\em i.e.}, a sequence $P_0, P_1,\cdots$ is constructed such that $P_0\supset P_1\supset\cdots\supset\tilde{D}$. The update from $P_k$ to $P_{k+1}$ ($k=0,1,\ldots$) is done in a way which is standard for pure outer approximation methods \cite{HT96}. That is, a certain linear inequality $l_k(\boldsymbol{x},t)\leq 0$ is added to the constraint set defining $P_k$, {\em i.e.}, we set
\begin{equation*}
P_{k+1} = P_k\cap\{(\boldsymbol{x},t)\in\mathbb{R}^n\times\mathbb{R}:l_k(\boldsymbol{x},t)\leq 0\}.
\end{equation*}
The function $l_k(\boldsymbol{x},t)$ is constructed as follows. At iteration $k$, we have a lower bound $\beta_k$ of $t-g(A)$ as defined in Eq.~\eqref{eq:beta} with $P=P_k$, and a point $(\bar{\boldsymbol{v}}_k,\bar{t}_k)$ satisfying $\bar{t}_k-\hat{g}(\bar{\boldsymbol{v}}_k)=\beta_k$. We update the outer approximation only in the case $(\bar{\boldsymbol{v}}_k,\bar{t}_k)\notin\tilde{D}$. Then, we can set
\begin{equation}
\label{eq:cplane}
l_k(\boldsymbol{x},t) = \boldsymbol{s}_k^T[(\boldsymbol{x},t)-\boldsymbol{z}_k]+(\hat{f}(\boldsymbol{x}_k^*)-t_k^*),
\end{equation}
where $\boldsymbol{s}_k$ is a subgradient of $\hat{f}(\boldsymbol{x})-t$ at $\boldsymbol{z}_k$. The subgradient can be calculated as, for example, stated in \cite{HK09} (see also \cite{Fuj05}).
\begin{proposition}
The hyperplane $\{(\boldsymbol{x},t)\in\mathbb{R}^n\times\mathbb{R}:l_k(\boldsymbol{x},t)=0\}$ strictly separates $\boldsymbol{z}_k$ from $\tilde{D}$, {\em i.e.}, $l_k(\boldsymbol{z}_k)>0$, and $l_k(\boldsymbol{x},t)\leq 0$ for $^\forall(\boldsymbol{x},t)\in\tilde{D}$.
\end{proposition}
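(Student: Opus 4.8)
The plan is to recognize $l_k$ as nothing more than the canonical affine minorant of the convex function $\phi(\boldsymbol{x},t):=\hat{f}(\boldsymbol{x})-t$ supported at the point $\boldsymbol{z}_k=(\boldsymbol{x}_k^*,t_k^*)$, and then to read off both assertions directly from the subgradient inequality together with the defining constraint of $\tilde{D}$. The whole argument rests on the single structural fact that, because $f$ is submodular, its Lov\'{a}sz extension $\hat{f}$ is convex \cite{Lov83}; hence $\phi(\boldsymbol{x},t)=\hat{f}(\boldsymbol{x})-t$ is convex on $\mathbb{R}^n\times\mathbb{R}$, being the sum of the convex $\hat{f}$ and the linear $-t$.

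First I would invoke the definition of $\boldsymbol{s}_k$ as a subgradient of $\phi$ at $\boldsymbol{z}_k$, which yields the subgradient inequality
\[
\phi(\boldsymbol{x},t)\ \geq\ \phi(\boldsymbol{z}_k)+\boldsymbol{s}_k^T[(\boldsymbol{x},t)-\boldsymbol{z}_k]\qquad\text{for all }(\boldsymbol{x},t)\in\mathbb{R}^n\times\mathbb{R}.
\]
Since $\phi(\boldsymbol{z}_k)=\hat{f}(\boldsymbol{x}_k^*)-t_k^*$ is exactly the constant term appearing in \eqref{eq:cplane}, the right-hand side coincides with $l_k(\boldsymbol{x},t)$; thus the subgradient inequality says precisely that $l_k(\boldsymbol{x},t)\leq\phi(\boldsymbol{x},t)=\hat{f}(\boldsymbol{x})-t$ everywhere. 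The second claim of the proposition is then immediate: for any $(\boldsymbol{x},t)\in\tilde{D}$ the defining inequality $\hat{f}(\boldsymbol{x})\leq t$ gives $\phi(\boldsymbol{x},t)\leq 0$, whence $l_k(\boldsymbol{x},t)\leq\phi(\boldsymbol{x},t)\leq 0$.

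For the first claim I would evaluate $l_k$ at $\boldsymbol{z}_k$ itself, where the bracketed linear term vanishes and therefore $l_k(\boldsymbol{z}_k)=\hat{f}(\boldsymbol{x}_k^*)-t_k^*$. It remains to argue this value is \emph{strictly} positive. Here I would use that the cut is constructed only in the branch where $\boldsymbol{z}_k\notin\tilde{D}$, combined with the fact that $\boldsymbol{x}_k^*\in\mathbb{B}^n\subset D$ (a binary vector is a vertex of the unit cube). Because $\boldsymbol{x}_k^*\in D$, the only way for $(\boldsymbol{x}_k^*,t_k^*)$ to fail to lie in $\tilde{D}=\{(\boldsymbol{x},t):\boldsymbol{x}\in D,\ \hat{f}(\boldsymbol{x})\leq t\}$ is to violate the functional constraint, i.e.\ $\hat{f}(\boldsymbol{x}_k^*)>t_k^*$, which gives $l_k(\boldsymbol{z}_k)>0$ as required.

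The main obstacle is not the inequality chain, which is routine once convexity of $\hat{f}$ is in hand, but rather securing the \emph{strictness} of the separation: one must be certain that the reason $\boldsymbol{z}_k\notin\tilde{D}$ is the functional constraint $\hat{f}(\boldsymbol{x}_k^*)>t_k^*$ and not a violation of $\boldsymbol{x}_k^*\in D$. This is exactly why establishing $\boldsymbol{x}_k^*\in D$ via $\boldsymbol{x}_k^*\in\mathbb{B}^n$ is the load-bearing step; without it one could only conclude the weak inequality $l_k(\boldsymbol{z}_k)\geq 0$, and the cut might fail to actually remove the infeasible point from $P_{k+1}$.
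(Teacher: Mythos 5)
Your proof is correct and follows essentially the same route as the paper's: the inequality $l_k(\boldsymbol{x},t)\leq 0$ on $\tilde{D}$ is exactly the subgradient inequality for the convex function $\hat{f}(\boldsymbol{x})-t$ combined with the defining constraint $\hat{f}(\boldsymbol{x})\leq t$, and the strict positivity at $\boldsymbol{z}_k$ comes from the standing assumption $\boldsymbol{z}_k\notin\tilde{D}$. The one thing you add beyond the paper's terse argument is the observation that $\boldsymbol{x}_k^*\in\mathbb{B}^n\subset D$, so the only way $\boldsymbol{z}_k$ can fail to lie in $\tilde{D}$ is via $\hat{f}(\boldsymbol{x}_k^*)>t_k^*$; the paper leaves this implicit, and making it explicit is a worthwhile clarification rather than a different approach.
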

\begin{proof}
Since we assume that $\boldsymbol{z}_k\notin\tilde{D}$, we have $l_k(\boldsymbol{z}_k)=(\hat{f}(\boldsymbol{x}_k^*)-t_k^*)$. And, the latter inequality is an immediate consequence of the definition of a subgradient.
\end{proof}

%%%%%
\subsection{Deletion rules}
\label{sss:delete}
At each iteration of the algorithm, we try to delete certain subprisms that contain no optimal solution. To this end, we adopt the following two deletion rules:
\begin{description}
\item[(DR1)] Delete $T_k$ if BILP~\eqref{eq:mip} has no feasible solution.
\item[(DR2)] Delete $T_k$ if the optimal value $c^*$ of BILP~\eqref{eq:mip} satisfies $c^*\leq0$.
\end{description}
The feasibility of these rules can be seen from Proposition~\ref{le:lower} as well as the D.C.\@ programing problem~\cite{HPTV91}. That is, (DR1) follows from Proposition~\ref{le:lower} that in this case $T\cap\tilde{D}=\emptyset$, {\em i.e.}, the prism $T$ is infeasible, and (DR2) from Proposition~\ref{le:lower} and from the definition of $\mu$ that the current best feasible solution cannot be improved in $T$.

%==================================================
\section{Experimental Results}
\label{se:exper}
We first provide illustrations of the proposed algorithm and its solution on toy examples from feature selection in Section~\ref{ss:art_data}, and then apply the algorithm to an application of discriminative structure learning using the UCI repository data in Section~\ref{ss:real}. The experiments below were run on a 2.8 GHz 64-bit workstation using Matlab and IBM ILOG CPLEX ver.~12.1.

\subsection{Application to feature selection}
\label{ss:art_data}

\begin{figure}[t]
\centering
\includegraphics[width=.32\linewidth]{./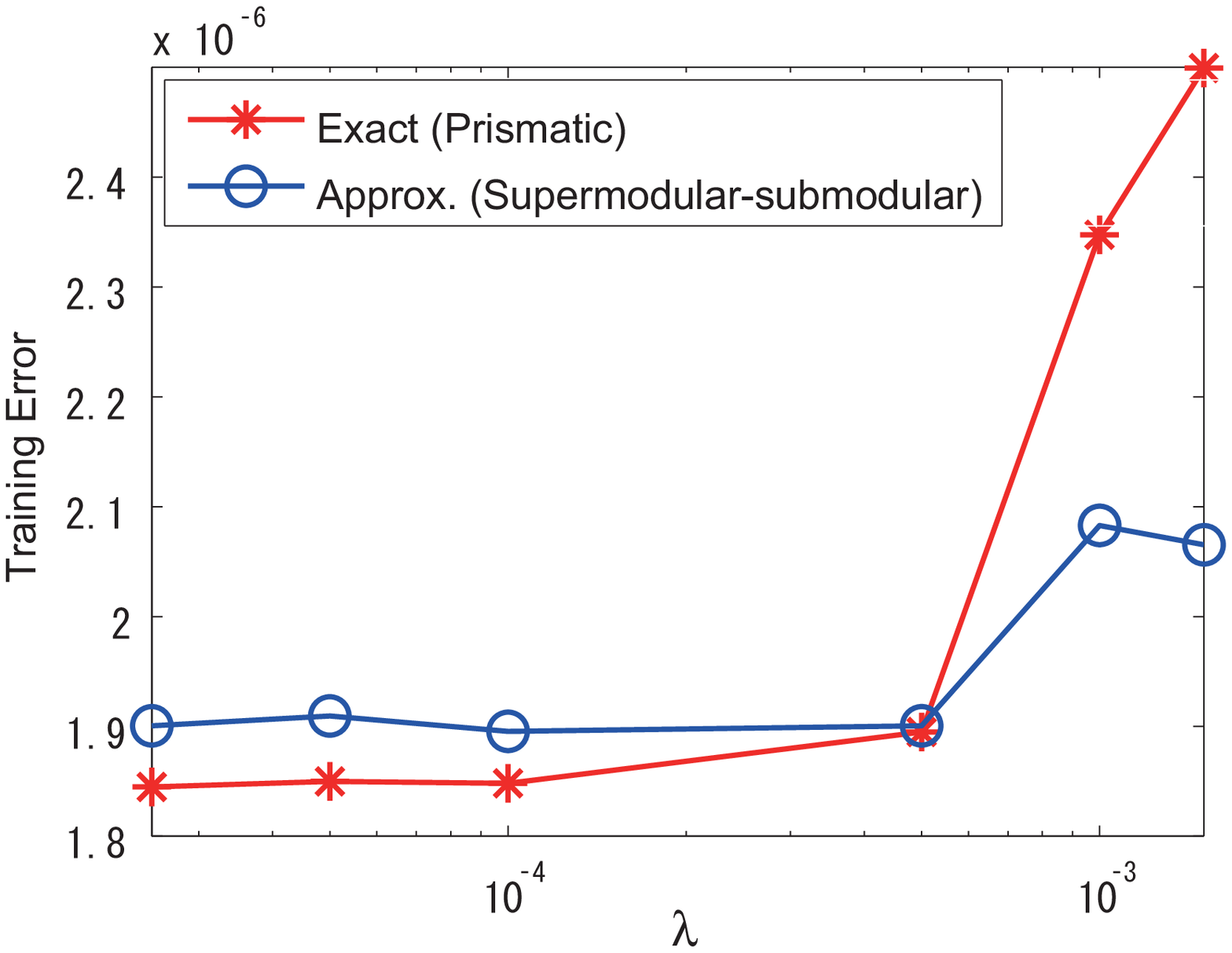}\hspace{.01\linewidth}
\includegraphics[width=.32\linewidth]{./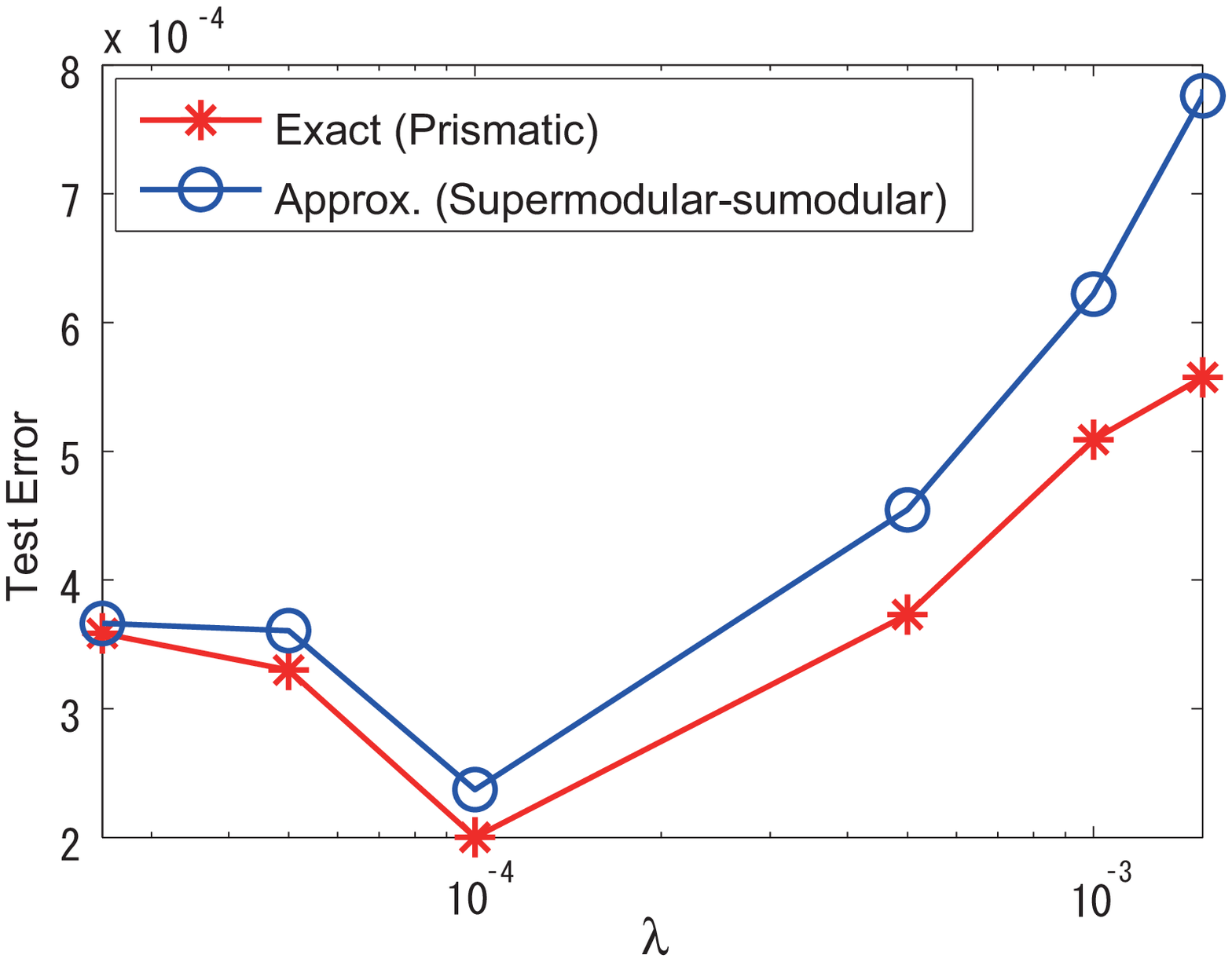}\hspace{.01\linewidth}
\includegraphics[width=.32\linewidth]{./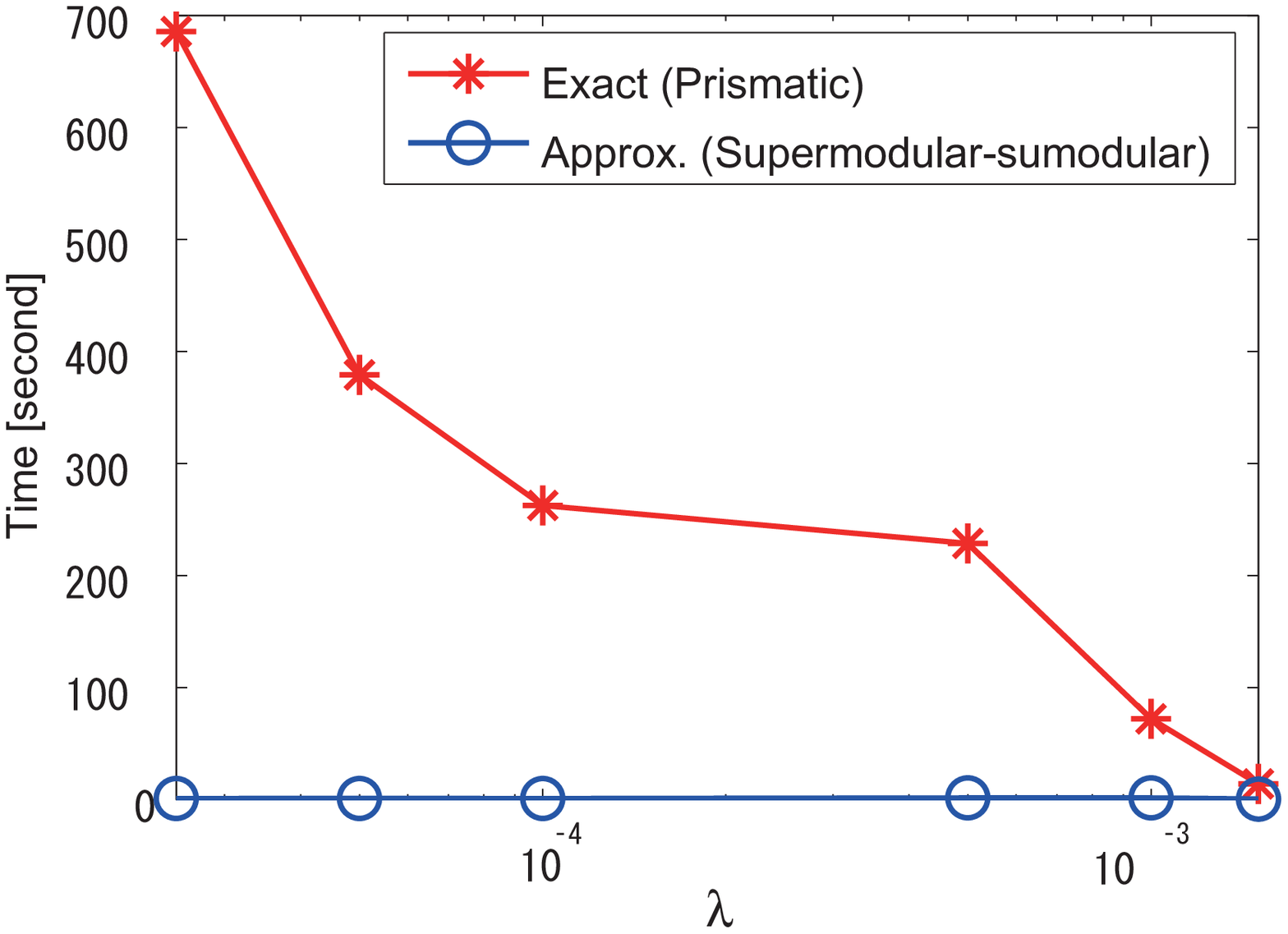}
\caption{Training errors, test errors and computational time versus $\lambda$ for the prismatic algorithm and the supermodular-sumodular procedure.}
\label{fig:fs}
\end{figure}

\begin{table}[t]
\vspace*{2mm}
\centering
\begin{tabular}{|ccc|cccc|}
\hline
p   & n   & k  & exact(PRISM) & SSP & greedy & lasso \\
\hline
120 & 150 & 5  & 1.8e-4 (192.6) & 1.9e-4 (0.93) & 1.8e-4 (0.45) & 1.9e-4 (0.78) \\
120 & 150 & 10 & 2.0e-4 (262.7) & 2.4e-4 (0.81) & 2.3e-4 (0.56) & 2.4e-4 (0.84) \\
120 & 150 & 20 & 7.3e-4 (339.2) & 7.8e-4 (1.43) & 8.3e-4 (0.59) & 7.7e-4 (0.91)\\
120 & 150 & 40 & 1.7e-3 (467.6) & 2.1e-3 (1.17) & 2.9e-3 (0.63) & 1.9e-3 (0.87)\\
\hline
\end{tabular}\vspace*{1mm}
\caption{Normalized mean-square prediction errors of training and test data by the prismatic algorithm, the supermodular-submodular procedure, the greedy algorithm and the lasso.}
\label{ta:fs_error}
\end{table}

We compared the performance and solutions by the proposed prismatic algorithm (PRISM), the supermodular-submodular procedure (SSP) \cite{NB05}, the greedy method and the LASSO. To this end, we generated data as follows:\@ Given $p$, $n$ and $k$, the design matrix $X\in\mathbb{R}^{n\times p}$ is a matrix of i.i.d.\@ Gaussian components. A feature set $J$ of cardinality $k$ is chosen at random and the weights on the selected features are sampled from a standard multivariate Gaussian distribution. The weights on other features are $0$. We then take $y=X\boldsymbol{w}+n^{-1/2}\|X\boldsymbol{w}\|_2\boldsymbol{\epsilon}$, where $\boldsymbol{w}$ is the weights on features and $\boldsymbol{\epsilon}$ is a standard Gaussian vector. In the experiment, we used the trace norm of the submatrix corresponding to $J$, $X_J$, {\em i.e.}, $\text{tr}(X_J^TX_J)^{1/2}$. Thus, our problem is $\min_{w\in\mathbb{R}^p}\frac{1}{2n}\|\boldsymbol{y}-X\boldsymbol{w}\|_2^2+\lambda\cdot\text{tr}(X_J^TX_J)^{1/2}$, where $J$ is the support of $\boldsymbol{w}$. Or equivalently, $\min_{A\in V}g(A)+\lambda\cdot\text{tr}(X_A^TX_A)^{1/2}$, where $g(A):=\min_{w_A\in\mathbb{R}^{|A|}}\|\boldsymbol{y}-X_A\boldsymbol{w}_A\|^2$. Since the first term is a supermodular function \cite{DK08} and the second is a submodular function, this problem is the D.S.\@ programming problem.

First, the graphs in Figure~\ref{fig:fs} show the training errors, test errors and computational time versus $\lambda$ for PRISM and SSP (for $p=120$, $n=150$ and $k=10$). The values in the graphs are averaged over 20 datasets. For the test errors, we generated another 100 data from the same model and applied the estimated model to the data. And, for all methods, we tried several possible regularization parameters. From the graphs, we can see the following: First, exact solutions (by PRISM) always outperform approximate ones (by SSP). This would show the significance of optimizing the submodular-norm. That is, we could obtain the better solutions (in the sense of prediction error) by optimizing the objective with the submodular norm more exactly. And, our algorithm took longer especially when $\lambda$ smaller. This would be because smaller $\lambda$ basically gives a larger size subset (solution). Also, Table~\ref{ta:fs_error} shows normalized-mean prediction errors by the prismatic algorithm, the supermodular-submodular procedure, the greedy method and the lasso for several $k$. The values are averaged over 10 datasets. This result also seems to show that optimizing the objective with the submodular norm exactly is significant in the meaning of prediction errors.

\subsection{Application to discriminative structure learning}
\label{ss:real}
Our second application is discriminative structure learning using the UCI machine learning repository.\footnote{\texttt{http://archive.ics.uci.edu/ml/index.html}} Here, we used CHESS, GERMAN, CENSUS-INCOME (KDD) and HEPATITIS, which have two classes. The Bayesian network topology used was the tree augmented naive Bayes (TAN) \cite{PB05}. We estimated TANs from data both in generative and discriminative manners. To this end, we used the procedure described in \cite{NB04} with a submodular minimization solver (for the generative case), and the one \cite{NB05} combined with our prismatic algorithm (PRISM) or the supermodular-submodular procedure (SSP) (for the discriminative case). Once the structures have been estimated, the parameters were learned based on the maximum likelihood method.

Table~\ref{ta:disc} shows the empirical accuracy of the classifier in [\%] with standard deviation for these datasets. We used the train/test scheme described in \cite{FGG97,PB05}. Also, we removed instances with missing values. The results seem to show that optimizing the EAR measure more exactly could improve the performance of classification (which would mean that the EAR is significant as the measure of discriminative structure learning in the sense of classification).\\

\begin{table}[t]
\centering
\begin{tabular}{|lcc|ccc|}
\hline
Data          & Attr.\@ & Class & exact (PRISM) & approx.\@ (SSP) & generative \\
\hline
Chess         & 36      & 2     & 96.6 ($\pm$0.69) & 94.4 ($\pm$0.71) & 92.3 ($\pm$0.79) \\
German        & 20      & 2     & 70.0 ($\pm$0.43) & 69.9 ($\pm$0.43) & 69.1 ($\pm$0.49) \\
Census-income & 40      & 2     & 73.2 ($\pm$0.64) & 71.2 ($\pm$0.74) & 70.3 ($\pm$0.74) \\
Hepatitis     & 19      & 2     & 86.9 ($\pm$1.89) & 84.3 ($\pm$2.31) & 84.2 ($\pm$2.11) \\
\hline
\end{tabular}\vspace*{1mm}
\caption{Empirical accuracy of the classifiers in [\%] with standard deviation by the TANs discriminatively learned with PRISM or SSP and generatively learned with a submodular minimization solver. The numbers in parentheses are computational time in seconds.}
\label{ta:disc}
\end{table}

%==================================================
\section{Conclusions}
\label{se:concl}
In this paper, we proposed a prismatic algorithm for the D.S.\@ programming problem~\eqref{eq:dsprog}, which is the first exact algorithm for this problem and is a branch-and-bound method responding to the structure of this problem. We developed the algorithm based on the analogy with the D.C.\@ programming problem through the continuous relaxation of solution spaces and objective functions with the help of the Lov\'{a}sz extension. We applied the proposed algorithm to several situations of feature selection and discriminative structure learning using artificial and real-world datasets.

The D.S.\@ programming problem addressed in this paper covers a broad range of applications in machine learning. In future works, we will develop a series of the presented framework specialized to the specific structure of each problem. Also, it would be interesting to investigate the extension of our method to enumerate solutions, which could make the framework more useful in practice.

\bibliography{dsprog_nips11}
\bibliographystyle{amsplain}
 
\end{document}